\newtheorem{theorem}{Theorem}
\newtheorem{lemma}[theorem]{Lemma}
\theoremstyle{definition}
\begin{document}

	\title{\huge Efficient Broadcast for Timely Updates in Mobile Networks}
	%\title{Energy-Efficient Timely Information Downloading in Mobile Networks}
	
	%on the energy-age traeoff in scheduling 
	% 
	% for  devices
	
	\author{\IEEEauthorblockN{Yu-Pin Hsu\vspace{-.2cm}}
		
		\thanks{Y.-P. Hsu is with the Department of Communication Engineering, National Taipei University, Taiwan (Email: yupinhsu@mail.ntpu.edu.tw).  This work was supported by MOST, Taiwan, under 107-2221-E-305-007-MY3.} 
		
	}

	\maketitle
	%\thispagestyle{empty}
	%\pagestyle{empty}
	
	%\vspace{10cm}
	
	\begin{abstract}
		%Age of information is a new the timeliness of information owned by a user. 
		This study considers a wireless network where an access point (AP) \textit{broadcasts} timely updates to numerous mobile users. The timeliness of information owned by a user is characterized by the \textit{age of information}. Frequently broadcasting the timely updates at constant maximum power can minimize the age of information for all users, but wastes valuable communication resources (ie., time and energy). To address the age-energy trade-off, it is critical to develop an efficient scheduling algorithm that identifies broadcast times and allocates power. Moreover, unpredictable user movement would cause rapidly varying communication channels; in particular, those channels can be \textit{non-stationary}. Our main contribution is to develop an \textit{online scheduling algorithm} and a \textit{channel-agnostic scheduling algorithm} for such a mobile network with a provable performance guarantee.
	\end{abstract}
	
	\begin{IEEEkeywords} 
		Age of information, scheduling, energy management.
	\end{IEEEkeywords}

	\section{Introduction}
	In recent years, there has been an explosive growth of {real-time} applications for mobile users that rely on timely information. For example, smart parking applications \cite{gu2019timely} require (downloading) timely parking information. In such applications, the timeliness of the information is critical. Thus, the \textit{age of information} was recently proposed in \cite{age:kaul} as a metric to measure the information timeliness. The goal is to develop a network to minimize the age of information.

	In this study, we consider a scenario where several  users run applications that request the same information (e.g., parking information) simultaneously through an access point (AP). To serve their requests, the AP  sends timely updates to those users from time to time. Moreover, because of the broadcast nature of the wireless medium, the AP can simultaneously transmit an update to multiple users by \textit{broadcast} (with a single transmission). To minimize the age of information, the AP transmits as many latest updates as possible and transmits them with the highest power level (such that most users can successfully decode the updates).  
	%causing a significant amount of energy consumption. 
	However, as the demand for mobile services continues to increase significantly, communication resources (i.e., time and energy) at the AP become more valuable. 
	To fully exploit the precious resources, this study aims to develop scheduling algorithms that determine \textit{when to transmit an update} and \textit{how to allocate the power for each transmission}. The goal is to strike a balance between the age of information  and energy consumption.

	The scheduling design 
	%for minimizing the age of information (e.g., \cite{sun2017update,kadota2019minimizing}) or 
	for managing the age-energy trade-off has drawn significant attention, e.g., \cite{nath2018optimum,arafa2019age,gu2019timely,tseng2019online}. Nearly all prior research assumed \textit{stochastic} environments where the environmental variables (e.g., communication channels) follow some probability distribution or have some stationary assumptions. However, a mobile user can move at will. Because of the users' highly unpredictable movement, it is infeasible to assume stationary channels for our problem. Our previous work \cite{tseng2019online} was an initial attempt to investigate the trade-off in an \textit{adversarial} environment. However, \cite{tseng2019online} only considered a \textit{single} user in the \textit{ON-OFF} channel caused by a \textit{fixed} power manager. In contrast, this study considers \textit{multiple} mobile users in general fading channels (with \textit{more than two states}). The significant uncertainty caused by multiple mobile users poses a major challenge. Moreover, this study allows adaptive power selection. This additional option can save more energy, but further complicates our scheduling design.

	The primary contribution of this study is to develop \textit{online} and \textit{channel-agnostic} scheduling algorithms. While the former needs the present channel states only (without future information), the latter needs no  channel state. We show that both algorithms have a \textit{universal} performance guarantee that is independent of the number of users and their movement.

	\section{System Overview} \label{section:system}
	
	\subsection{Network Model} \label{subsection:network}
	
	%\begin{figure}[!t]
	%	\centering
	%	\includegraphics[width=.2\textwidth]{model.eps}
	%	\caption{A mobile network consisting of a AP and four mobile users, where the AP has two power levels (represented by the two circles).}
	%	\label{fig:model}
	%\end{figure}
	We consider a wireless network consisting of an AP and $N$ mobile users, where users $1, \cdots, N$ move in the AP service area. The users run applications that request the same timely updates through the AP. Divide time into slots $1, 2, \cdots, T$, where~$T$ is the time horizon under consideration. 
	%We will develop an efficient scheme for serving the users, whose performance (that will be defined soon) is independent of the parameters $N$ and $T$.

	%Consider a periodic updating process, where the information source generates an timely update at the beginning of each slot. 
	
	At the beginning of each slot, the AP \textit{decides} whether or not to serve the users. If the AP decides to transmit for a slot, then it obtains an update from an information source, allocates the transmission power, and transmits the update by \textit{broadcast} during that slot. 
	In this study, we assume that the AP can immediately obtain the update from the information source (through wired networks) and focus on the bottleneck between the AP and users (through wireless networks). 
	Let $d(t)\in \{0, 1, \cdots, M\}$ be the AP's decision\footnote{To inform the users, the AP can broadcast its scheduling decision using the control channel \cite[Chapter 19]{dahlman20164g} at the beginning of each slot.} for slot~$t$, where $d(t)=0$ if it decides not to serve and $d(t) \in \{1, \cdots, M\}$ is the power level allocated for broadcast in slot~$t$ if it decides to serve. Suppose that the power consumption increases with the power level index.

	%the AP obtain
	%
	%Moreover, 
	
	%Let $s_i(t) \in \{1, \cdots, M, \infty\}$ be the channel state of  user~$i$ in slot~$t$,  $s_i(t)=k$ for some $k=$ indicate the minimum power the AP should take such that user~$i$ can successfully receive 
	%
	% if the  user~$i$ can successfully receive the fresh information in slot~$t$ when the AP 
	
	%
	%where $s_i(t)=0$ if the  user does not connect to the AP for the information (i.e., $t \neq 1, \cdots, T$) and

	We use $\mathbf{1}_{i,k}(t) \in \{0, 1\}$ to indicate if user~$i$ can receive (i.e., successfully decode) the update in slot~$t$ if the AP takes power level~$k$, where $\mathbf{1}_{i,k}(t)=1$ if it can but $\mathbf{1}_{i,k}(t)=0$ if it cannot. The indicator function $\mathbf{1}_{i,k}(t)$ depends on the channel quality\footnote{\label{footnote:channel} Consider the slow fading  model in \cite[Chapter 5]{tse2005fundamentals} as an example. Let $h_i(t)$ be the channel gain for user~$i$ in slot~$t$. Let $\mathbf{SNR}_{i,k}(t)$ be the signal-to-noise (SNR) ratio  for user~$i$ under power level~$k$ in slot~$t$. Then, if $\log (1+|h_i(t)|^2\mathbf{SNR}_{i,k}(t))$	is greater than a threshold such that the update can be reliably delivered during the slot period, then $\mathbf{1}_{i,k}(t)=1$; otherwise, $\mathbf{1}_{i,k}(t)=0$.} between the AP and user~$i$ in slot~$t$. 
	We suppose that, for each slot, the AP can deliver an update to all users with the maximum power level~$M$, that is, the maximum power level~$M$ specifies the AP service area. 
	
	Let $s_i(t)=(\mathbf{1}_{i,1}(t), \cdots, \mathbf{1}_{i,M}(t))$ be the \textit{channel state} of user~$i$ in slot~$t$. Note that if $\mathbf{1}_{i,k}(t)=1$, then $\mathbf{1}_{i, k'}(t)=1$ for all $k' \geq k$. Thus, there are $M$ potential states\footnote{Follow Footnote~\ref{footnote:channel}. We discretize the channel gain and the SNR into $M$ bins according to the $M$ power levels. The channel gain and SNR such that the AP cannot reliably deliver with power level~$k-1$ but can with power level~$k$ are associated with state $(0, \cdots, 0, 1, \cdots, 1)$ with the first $k-1$ elements being zeros.} for $s_i(t)$, i.e., $(1, \cdots, 1)$, $(0, 1, \cdots, 1)$, $\cdots$, $(0, \cdots, 0, 1)$. Let $\mathbf{s}_i=(s_i(1), \cdots, s_i(T))$ be the \textit{channel state pattern} of user~$i$ over slots, and represent the channel states caused by its movement. Because of  unpredictable movement, the channel state pattern~$\mathbf{s}_i$ can be \textit{arbitrary} with no stationary\footnote{Follow Footnote~\ref{footnote:channel}. This study does not assume any distribution for $h_i(t)$ (e.g., Rayleigh distribution).} property  for all~$i$. Moreover,~$\mathbf{s}_i$ and~$\mathbf{s}_j$ for different~$i$ and~$j$ can have any correlation, for example, when a group of users move together. Let $\mathbf{s}=(\mathbf{s}_1, \cdots, \mathbf{s}_N)$ represent the \textit{entire channel state pattern} of all users.

	\subsection{Age Model} \label{subsection:age model}
	We use the age of information \cite{age:kaul} to measure the information timeliness for each user. If a user receives an update in a slot, then its age of information becomes zero at the end of that slot; otherwise, the age of information at the end of that slot increases by one from the previous slot. Let $a_i(t)$ be the age of information for user~$i$ at the end of slot~$t$. Then,  we can describe age $a_i(t)$ for user~$i$ in slot~$t$ by
	\begin{align}
		a_i(t)=\left\{
		\begin{array}{ll}
			0 & \text{if $\mathbf{1}_{i,d(t)}(t)=1$;}\\
			a_i(t-1)+1 & \text{if $\mathbf{1}_{i,d(t)}(t)=0$.}
		\end{array}
		\right.
		\label{eq:age-dynamic}
	\end{align}
	We assume the initial age $a_i(0)=0$ for all $i$. 
	
	%We want to remark that in most prior works on the age of information like \cite{.}, the age of information for a user should become one when the user receives an update because the update has aged for one slot since it was generated.  Note that,  in that age model,  the age of information for user~$i$ in each slot~$t$ is our $a_i(t)$ plus one. Later,  Remark~\ref{remark:age-answer} will explain that our age model in Eq.~(\ref{eq:age-dynamic}) suffices for our optimization problem. 

	\subsection{Problem Formulation} \label{subsection:problem}

	A \textit{scheduling algorithm} $\pi=\{d(1), \cdots, d(T)\}$ specifies decision $d(t)$ for all slots. To develop scheduling algorithms striking a balance between the users' information timeliness and the AP's energy consumption, we define an \textit{age cost} and a \textit{transmission cost} as follows: Suppose that the age of one unit in a slot incurs a cost of one unit in that slot. Then, the age cost incurred by the stale information at user~$i$ in slot~$t$ is $a_i(t)$. We consider the average age cost $(\sum_{i=1}^N a_i(t))/N$ over the number of users (for a fair comparison across different numbers $N$ of users). Moreover, suppose that the power level~$k$ incurs a transmission cost of $C_k$ units. For example, $C_k=\mathcal{W}\cdot \mathcal{E}(k)$, where $\mathcal{W}$ is the weight between the unit energy consumption and the unit age, and $\mathcal{E}(k)$ is the energy consumption under power level~$k$. Note that the function $C_k$ in this study can be any non-decreasing function with power level~$k$, that is, it does not need to linearly increase with $\mathcal{E}(k)$.

	For an entire channel state pattern $\mathbf{s}$, we define the \textit{total cost} $J(\mathbf{s},\pi)$ under a scheduling algorithm $\pi$ by the sum of the transmission costs and the average age costs:
	\begin{align}
		J(\mathbf{s}, \pi) =\sum_{t=1}^T \Bigl(C_{d(t)}+ \frac{1}{N} \sum_{i =1}^N a_i(t)\Bigr),
		\label{eq:cost}
	\end{align}
	where we set $C_0=0$ indicating the zero transmission cost when $d(t)=0$. 
	Eq.~(\ref{eq:cost}) captures the age-energy trade-off. The higher power level~$d(t)$ the AP takes (in the first term of Eq.~(\ref{eq:cost})), the more users can receive the latest update, yielding a smaller average age cost (in the second term of Eq.~(\ref{eq:age-dynamic})). Conversely, taking a lower power level increases the age of information for those users who cannot decode the update as a result of poor channels. 
	
	%In this letter, we aim to develop  an online  scheduling algorithm and a channel-agnostic scheduling algorithm  for minimizing  the total  cost. However, without knowing the entire channel state pattern,  an online or channel-agnostic scheduling algorithm is unlikely to achieve the minimum total cost (obtained by an optimal offline scheduling algorithm).  
	
	This study proposes two scheduling algorithms. On the one hand, for the case where the AP can obtain channel state~$s_i(t)$ for all~$i$ in slot~$t$  (by channel estimation techniques such as \cite{she2020tutorial}), this study develops an \textit{online} scheduling algorithm that makes decision~$d(t)$ for slot~$t$ with the \textit{present} channel states $s_i(t)$ (with no \textit{future} channel state information) and the \textit{present} ages~$a_i(t)$ of all users~$i$. On the other hand, for the case where the AP cannot obtain the channel state information, this study develops a \textit{channel-agnostic} scheduling algorithm that makes decision $d(t)$ for slot~$t$ with $C_1$, $C_M$ (with no channel state information), and the \textit{present} ages.

	We analyze the proposed algorithms in terms of \textit{competitiveness} against an optimal \textit{offline} scheduling algorithm (which has the entire channel state pattern~$\mathbf{s}$ along with the time horizon~$T$ as  prior information). To that end, for an entire channel state pattern $\mathbf{s}$, let $OPT(\mathbf{s})=\min_{\pi'} J(\mathbf{s},\pi')$ be the minimum total cost for all possible (offline) scheduling algorithms~$\pi'$. If there is a constant~$\gamma$ such that $J(\mathbf{s}, \pi) \leq \gamma \cdot OPT(\mathbf{s})$, 
	for all possible entire channel state patterns~$\mathbf{s}$, then the constant~$\gamma$ is called the \textit{competitive ratio} of the scheduling algorithm~$\pi$. 
	
	%With a $\gamma$-competitive  scheduling algorithm, the resulting total cost can be guaranteed to be at most~$\gamma$ times the minimum total cost, \textit{regardless of  channel state pattern $\mathbf{s}$}. 
	%In this letter, we  develop an online scheduling algorithm for minimizing the  competitive ratio $R$. 

	\section{Scheduling Algorithm Design}\label{section:scheduling}
	This study approaches the scheduling problem by leveraging \textit{primal-dual techniques} \cite{online-compuatation-naor} for \textit{linear} programs. To that end, Section~\ref{subsection:qeuue} constructs a virtual queueing system for describing the age evolution in Eq.~(\ref{eq:age-dynamic}). With the assistance of the virtual queueing system, Section~\ref{subsection:primal-dual} proposes a linear program whose optimal objective value is the lower bound on the total cost. In addition, Section~\ref{subsection:primal-dual} formulates the dual program of the linear program (as a primal program) for analyzing our proposed algorithms (using duality theory). Then, Section~\ref{subsection:pda} develops a primal-dual algorithm that \textit{online} produces a feasible solution to the primal program and the dual program . Next, Section~\ref{subsection:online} proposes a (randomized) online scheduling algorithm by casting the fractional solution produced by the primal-dual algorithm in each slot to a randomized decision for that slot. Finally, Section~\ref{subsection:channe-agnostic} proposes a (randomized) channel-agnostic scheduling algorithm that achieves the same competitive ratio as  the online scheduling algorithm does. 
	
	\subsection{Virtual Queueing System} \label{subsection:qeuue}
	This section constructs a virtual queueing system consisting of queues $1, \cdots, N$ (corresponding to users $1, \cdots, N$) so that the queue size evolution is equivalent to the age evolution in Eq.~(\ref{eq:age-dynamic}). 
	
	At the beginning of each slot~$j$, \textit{each} queue $i$ has a newly arriving packet~$j$. Note that a total of $N$ packet~$j$'s arrive at the queueing system in slot~$j$ because the system has $N$ queues. 
	%Also note that the first arrival at queue~$i$ is packet~$1$ and the last one is packet~$T$ because queue~$i$ is only during the period from slot~$1$ and~$T$. We use the packet index for simplifying the following mathematical expressions. 
	In addition, for each slot~$j$, if user~$i$ can receive the update (in the real-world mobile network), then queue~$i$ flushes all its packets (in the virtual queueing system); otherwise, queue~$i$ idles.
	%The arrival process and service process for an \textit{} queue~$i$ are following:
	%\begin{itemize}
	%	\item 
	%	
	%	\item 
	%\end{itemize}
	
	According to the arrival and service processes of queue~$i$, the queue size at the end of slot~$t$ becomes zero if $\mathbf{1}_{i,d(t)}(t)=1$, or increases by one if $\mathbf{1}_{i,d(t)}(t)=0$. The queue size evolution is identical to the age evolution in Eq.~(\ref{eq:age-dynamic}). Thus, the queue size of queue~$i$ at the end of slot~$t$ exactly expresses~$a_i(t)$.
	
	\subsection{Primal-Dual Formulation} \label{subsection:primal-dual}
	Leveraging the queueing system constructed in Section~\ref{subsection:qeuue}, this section proposes an integer program for \textit{optimally} solving the offline scheduling problem. Let $z_{i,j}(t)\in \{0, 1\}$ indicate if packet~$j$ (arriving in slot~$j$) stays at queue~$i$ at the end of slot~$t$, where $z_{i,j}(t)=1$ if it does but \mbox{$z_{i,j}(t)=0$} otherwise. According to Section~\ref{subsection:qeuue}, age $a_i(t)$ in slot~$t$ is the queue size of queue~$i$ at the end of slot~$t$. Thus, age $a_i(t)$ in Eq.~(\ref{eq:age-dynamic}) can be expressed by $a_i(t)=\sum_{j=1}^t z_{i,j}(t)$, counting all packets arriving at queue~$i$ by slot~$t$. Moreover, let $x_k(t) \in \{0,1\}$ indicate if the power level~$k$ is selected in slot~$t$, where $x_k(t)=1$ and $x_{k'}(t)=0$ for all $k' \neq k$ if and only if $d(t)=k$.
	Then, the cost $C_{d(t)}$ in Eq.~(\ref{eq:cost}) can be expressed as $C_{d(t)}=\sum_{k=1}^M C_k x_k(t)$. Substituting $C_{d(t)}$ and $a_i(t)$ in Eq.~(\ref{eq:cost}) by the new expressions, we can re-write the total cost $J(\mathbf{s}, \pi)$ by a linear function in terms of $x_k(t)$ and $z_{i,j}(t)$: 
	\begin{align}
		J(\mathbf{s}, \pi)=\sum_{t=1}^T \left(\sum_{k=1}^M C_k x_k(t) +\frac{1}{N} \sum_{i=1}^N\sum_{j=1}^t z_{i,j}(t)\right). \label{eq:cost-2}
	\end{align}

	Then, we propose the following \textit{integer program} for optimally solving our scheduling problem when the entire channel state pattern~$\mathbf{s}$ is known in advance:
	%\textbf{Integer program (linear program):}
	\begin{subequations}	\label{integer-program}
		\begin{align}
			\min & \hspace{.2cm}\sum_{t=1}^T \left(\sum_{k=1}^M C_k x_k(t)  + \frac{1}{N}\sum_{i=1}^N\sum_{j=1}^t z_{i,j}(t)\right) \label{inter-program:objective}\\
			\text{s.t.} & \hspace{.2cm} z_{i,j}(t) + \sum_{\tau =j}^{t} \sum_{k=1}^M \mathbf{1}_{i,k}(\tau) x_k(\tau) \geq 1 \nonumber\\
			&\hspace{.2cm}\text{for all $i=1, \cdots, N$, $j=1, \cdots, t$, and $t=1, \cdots, T$;}	\label{interger-program:const1}\\
			& \hspace{.2cm} \text{$x_k(t), z_{i,j}(t) \in \{0,1\}$ for all~$i$,~$j$,~$k$,~$t$.} \label{interger-program:const2} 
		\end{align}
	\end{subequations}
	The constraint in Eq.~(\ref{interger-program:const1}) guarantees that, for each slot~$t$, each packet~$j \leq t$ (arriving at queue~$i$ by slot~$t$) \textit{either} remains in queue~$i$ in slot~$t$ (i.e., $z_{i,j}=1$ in the first term of Eq.~(\ref{interger-program:const1})) \textit{or} has been flushed by slot~$t$ (i.e., there exists a prior slot~$\tau=j, \cdots, t$ and a power level~$k=1, \cdots, M$ such that $x_k(\tau)=1$ and $\mathbf{1}_{i,k}(\tau)=1$ in the second term of Eq.~(\ref{interger-program:const1}).
%Note that the optimization problem neglects a redundant constraint $\sum_{k=1}^M x_k(t)\leq 1$ for all~$t$, because the minimization problem automatically satisfies that constraint. 

%\begin{remark}\label{remark:age-answer}
%If we set the age for  user $u_i$ at the end a slot to  one when it receives an update in that slot (as in most prior works like \cite{.}), then that  age  for each slot~$t$ is our $a_i(t)$ plus one.  That age model causes addition cost of~$t$ units, counting all slots. Adding the additional constant age cost~$t$ to Eq.~(\ref{inter-program:objective}) does not change an optimal solution for decision variable $x_k(t)$. For clarity, this letter focuses on the age model in Eq.~(\ref{eq:age-dynamic}).    
%\end{remark}

By relaxing the integral constraints in Eq.~(\ref{interger-program:const2}) to real numbers, we have a \textit{linear program}. Because of the relaxation, the minimum objective value for the linear program is the lower bound on the minimum total cost (obtained by the integer program). Moreover, unlike the integer program, a feasible solution to the linear program can be fractional, which no longer represents an immediate decision to broadcast an update or allocate power. Subsequently, Section~\ref{subsection:online} will cast a fractional solution for~$x_k(t)$ to a \textit{probabilistic} decision in slot~$t$.

%\textbf{Linear program (primal program):}
%\begin{subequations}	\label{primal-program}
%	\begin{align}
%		\min & \hspace{.2cm}\sum_{t=1}^T \left(\sum_{k=1}^M C_k x_k(t)  + \frac{1}{N}\sum_{i=1}^N\sum_{j=1}^t z_{i,j}(t)\right) \label{primal-program:objective}\\
%		\text{s.t.} & \hspace{.2cm} z_{i,j}(t) + \sum_{\tau =j}^{t} \sum_{k=1}^M \mathbf{1}_{i,k}(\tau) x_k(\tau) \geq 1, \nonumber\\
%		& \hspace{.2cm}\text{for  all $i \in N(t)$, $j=1, \cdots, t$, and  $t=1, \cdots, T$;}	\label{interger-program:const1}\\
%		& \hspace{.2cm}  \text{$x_k(t), z_{i,j}(t) \geq 0$ for all~$i$,~$j$,~$k$, and~$t$.} \label{primal-program:const2} 
%	\end{align}
%\end{subequations}
%

To analyze the solution produced by the proposed algorithms, we leverage duality theory \cite{online-compuatation-naor}. Thus, we refer to the linear program as a \textit{primal program} and formulate its \textit{dual program} as follows: 
%\textbf{Dual program:}
\begin{subequations}	\label{dual-program}
	\begin{align}
		\max& \hspace{.2cm}\sum_{t=1}^T \sum_{i=1}^N \sum_{j=1}^t y_{i,j}(t) \label{dual-program:objective}\\
		\text{s.t.} & \hspace{.2cm}\sum_{i=1}^N\mathbf{1}_{i,k}(t) \sum_{j=1}^t\sum_{\tau=t}^{T} y_{i,j}(\tau) \leq C_k \text{\,\,for all $k$, $t$;} 
		\label{dual-program:const1}\\
		& \hspace{.2cm}\text{$0 \leq y_{i,j}(t) \leq \frac{1}{N}$ for all $i$,~$j$,~$t$.} \label{dual-program:const2} 
	\end{align}
\end{subequations}

\subsection{Primal-Dual Algorithm}\label{subsection:pda}

\begin{algorithm}[t]
	\SetAlgoLined 
	\SetKwFunction{Union}{Union}\SetKwFunction{FindCompress}{FindCompress} \SetKwInOut{Input}{input}\SetKwInOut{Output}{output}
	%	\Input{$c$.}
	%	\Output{$d(t)$.}
	%	

	$x_k(t)$, $z_{i,j}(t)$, $y_{i,j}(t)$  $\leftarrow 0$ for all~$i$,~$j$,~$k$,~$t$\; \label{primal-dual-alg1:initial}
	
	$\theta \leftarrow (1+\frac{1}{C_{M}})^{\lfloor C_{1} \rfloor}-1$\;	\label{primal-dual-alg1:constant}

	\tcc{For each  slot~$t$, update as follows:}

	%	\If{$N(t) \neq \emptyset$\label{primal-dual-alg1:nonempty}}{
	Identify a minimum~$k$ (denoted by $k^*_t$) such that $\mathbf{1}_{i,k}(t)=1$ for all $i=1, \cdots, N$\; \label{primal-dual-alg1:max}
	\For{$j=1$ \KwTo~$t$\label{primal-dual-alg1:for1}}{	
		\If{$\sum_{\tau=j}^{t}x_{k^*_{\tau}}(\tau)<1$  \label{primal-dual-alg1:condition}}{			
			
			$z_{i,j}(t) \leftarrow 1- \sum_{\tau=j}^{t}x_{k^*_{\tau}}(\tau)$ for all $i=1, \cdots, N$\; \label{primal-dual-alg1:z1}
			$x_{k^*_t}(t) \leftarrow x_{k^*_t}(t)+ \frac{1}{C_{k^*_t}}\sum_{\tau=j}^{t} x_{k^*_{\tau}}(\tau)+\frac{1}{\theta \cdot C_{k^*_t}}$\; 	 \label{primal-dual-alg1:x}
			%	$x_{k^*_t}(t)\leftarrow x_{k^*_t}(t)+ x_{k^*}(t)$\; 	 \label{primal-dual-alg1:x}	
			$y_{i,j}(t)\leftarrow \frac{1}{N}$ for all $i=1, \cdots, N$\; \label{primal-dual-alg1:y1}
			
			%		}
		}\label{primal-dual-alg1:for1end}
	}

	\caption{Primal-dual algorithm.}
	\label{primal-dual-alg1}
\end{algorithm}

This section proposes a primal-dual algorithm in Alg.~\ref{primal-dual-alg1} for obtaining a feasible solution to the primal and dual programs in the \textit{online} setting, where for each new slot~$t$ Alg.~\ref{primal-dual-alg1} has the constraints in Eqs.~(\ref{interger-program:const1}) and~(\ref{dual-program:const1}) only until slot~$t$ (but has no entire set of constraints). 
% While the solution for $x_i(t)$ in primal program~(\ref{primal-program}) for each slot~$t$ will be cast to a probabilistic decision in Section~\ref{.}, the solution to the dual program~(\ref{dual-program}) will be  used  for analyzing the primal objective value in Eq.~(\ref{primal-program:objective}) computed  using  the primal-dual algorithm.

Alg.~\ref{primal-dual-alg1} initializes all the variables to zero in Line~\ref{primal-dual-alg1:initial}. For each new slot~$t$, Alg.~\ref{primal-dual-alg1} updates all the variables according to the present channel state~$s_i(t)$ for all~$i$ (with no future channel state pattern). Line~\ref{primal-dual-alg1:max} identifies a maximum value (denoted by $k^*_t$ for slot~$t$) of~$k$ such that $\mathbf{1}_{i,k}(t)=1$ for all~$i$. The underlying idea is that for each slot~$t$, our scheduling algorithm (in Section~\ref{subsection:online}) decides either $d(t)=k^*_t$ \textit{with some probability} or $d(t)=0$ otherwise. The decision $d(t)=k^*_t$ allocates the minimum power for successfully broadcasting to \textit{all} users in slot~$t$; in turn, all queues flush their packets in slot~$t$.

This probabilistic decision is based on the present value of $x_{k^*_t}(t)$ updated in Line~\ref{primal-dual-alg1:x}. For each slot,~$t$, Alg.~\ref{primal-dual-alg1} updates the value of~$x_{k^*_t}(t)$ by iteration (in Line~\ref{primal-dual-alg1:for1}) from iteration $j=1$ (for the first arriving packet~1) until iteration $j=t$ (for the newly arriving packet~$t$), if the condition in Line~\ref{primal-dual-alg1:condition} holds. To understand the idea behind the condition in Line~\ref{primal-dual-alg1:condition} and the update in Line~\ref{primal-dual-alg1:x}, we interpret the value of $\sum_{\tau=j}^t\sum_{k=1}^M \mathbf{1}_{i,k}(\tau)x_k(\tau)$ (in Eq.~(\ref{interger-program:const1})) as the cumulative probability that packet~$j$ gets flushed by slot~$t$ since its arrival (in slot~$j$). Note that, for each slot~$\tau$, Alg.~\ref{primal-dual-alg1} updates the value of $x_{k^*_{\tau}}(\tau)$ only (in line~\ref{primal-dual-alg1:x}) but keeps the value of $x_k(\tau)$ for all $k \neq k^*_{\tau}$ unchanged. In addition, note that $\mathbf{1}_{i,k^*_{\tau}}(\tau)=1$ for all~$\tau$. 
Thus, the value of $\sum_{\tau=j}^t\sum_{k=1}^M \mathbf{1}_{i,k}(\tau)x_k(\tau)$ becomes that of $\sum_{\tau=j}^t x_{k^*_{\tau}}(\tau)$. In other words, the value of $\sum_{\tau=j}^t x_{k^*_{\tau}}(\tau)$ implies the cumulative flushing probability of packet~$j$ by slot~$t$. 

With the above interpretation, the condition in Line~\ref{primal-dual-alg1:condition} indicates if packet~$j$ has been flushed by slot~$t$. On the one hand, if $\sum^t_{\tau=j}x_{k^*_{\tau}}(\tau) \geq 1$, then packet~$j$ has been flushed; thus, no variable needs to be updated. On the other hand, if $\sum^t_{\tau=j}x_{k^*_{\tau}}(\tau) <1$, then packet~$j$ might still exist in slot~$t$; thus, its associated variables are updated. For each packet that may exist in slot~$t$, Line~\ref{primal-dual-alg1:x} increases the value of $x_{k^*_t}(t)$, that is, the more packets that exist, the higher the flushing probability.

Moreover, according to Line~\ref{primal-dual-alg1:x}, the cumulative flushing probability $\sum_{\tau=j}^t x_{k^*_{\tau}}(\tau)$ is updated by
\begin{align*}
	\sum_{\tau=j}^t x_{k^*_{\tau}}(\tau) \leftarrow \,\,&\sum_{\tau=j}^t x_{k^*_{\tau}}(\tau)+ \frac{1}{C_{k^*_t}} \sum_{\tau=j}^t x_{k^*_{\tau}}(\tau)+\frac{1}{\theta \cdot C_{k^*_t}}\\
	&=\left(1+\frac{1}{C_{k^*_t}}\right)\sum_{\tau=j}^t x_{k^*_{\tau}}(\tau)+\frac{1}{\theta \cdot C_{k^*_t}}.
\end{align*}
That is, Line~\ref{primal-dual-alg1:x} increases the cumulative flushing probability with the multiplicative scale of $(1+1/C_{k^*_t})$ and an additive scale of $1/(\theta \cdot C_{k^*_t})$. The term $C_{k^*_t}$ appears in the denominators because the higher the value of $C_{k^*_t}$ is, the less the flushing probability rises. Moreover, the constant value $\theta$ in Line~\ref{primal-dual-alg1:x} is specified in Line~\ref{primal-dual-alg1:constant} to satisfy the dual constraints in Eq.~(\ref{dual-program:const1}). We want to emphasize that, although there are $N$ queues that have packet~$j$, Line~\ref{primal-dual-alg1:x} increases the value of $x_{k^*_t}(t)$ only \textit{once} for iteration~$j$ (instead of intuitively increasing that value for all packet $j$'s at queues~$1, \cdots, N$). The underlying idea is that, with the decision $d(t)=k^*_t$ or $d(t)=0$ in our scheduling algorithm (in Section~\ref{subsection:online}), all queue sizes are \textit{the same} for all slots. Thus, Line~\ref{primal-dual-alg1:x} has been scaled to capture all packet~$j$'s for each iteration~$j$. Such a user-number-independent update for each iteration can reduce the computational complexity of our channel-agnostic algorithm (in Section~\ref{subsection:channe-agnostic}).

In addition, Line~\ref{primal-dual-alg1:z1} updates the value of $z_{i,j}(t)$ to that of $1-\sum^t_{\tau=j}x_{k^*_{\tau}}(\tau)$ to satisfy the primal constraints in Eq.~(\ref{interger-program:const1}). Finally, Line~\ref{primal-dual-alg1:y1} updates the value of $y_{i,j}(t)$ to $1/N$ to maximize the dual objective value in Eq.~(\ref{dual-program:objective}).

We analyze the primal objective value in Eq.~(\ref{inter-program:objective}) computed by Alg.~\ref{primal-dual-alg1} as follows (dy the duality theory).
\begin{theorem} \label{theroem:competitive-ratio1}
	The primal objective value in Eq.~(\ref{inter-program:objective}) computed by Alg.~\ref{primal-dual-alg1} at the end of slot~$T$ is bounded above by
	\begin{align*}
		\left( 1+ \frac{1}{(1+\frac{1}{C_{M}})^{\lfloor C_{1} \rfloor}-1}\right) OPT(\mathbf{s}),
	\end{align*}
	for all possible entire channel state pattern~$\mathbf{s}$.
\end{theorem}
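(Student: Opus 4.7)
\textbf{Proof plan for Theorem~\ref{theroem:competitive-ratio1}.} My approach is the standard primal-dual framework: show that Alg.~\ref{primal-dual-alg1} maintains feasibility for both the primal linear program (the LP relaxation of~(\ref{integer-program})) and the dual program~(\ref{dual-program}), bound the per-step ratio between the two objectives, and then invoke weak LP duality together with the fact that the LP is a relaxation of~(\ref{integer-program}), so the dual value lower-bounds $OPT(\mathbf{s})$.

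Step~1 (primal feasibility). Fix a triple $(i,j,t)$. Because only $x_{k_\tau^*}(\tau)$ ever gets incremented and $\mathbf{1}_{i,k_\tau^*}(\tau)=1$ by the definition of $k_\tau^*$, the primal constraint~(\ref{interger-program:const1}) simplifies to $z_{i,j}(t)+\sum_{\tau=j}^t x_{k_\tau^*}(\tau)\geq 1$. If the condition in Line~\ref{primal-dual-alg1:condition} holds, Line~\ref{primal-dual-alg1:z1} sets $z_{i,j}(t)$ to exactly the deficit $1-\sum_{\tau=j}^t x_{k_\tau^*}(\tau)$; otherwise the sum is already $\geq 1$ and $z_{i,j}(t)=0$ suffices. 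Step~2 (per-iteration ratio). In the iteration of slot~$t$ with index~$j$ where Line~\ref{primal-dual-alg1:condition} fires, let $S$ denote the value of $\sum_{\tau=j}^t x_{k_\tau^*}(\tau)$ just before Line~\ref{primal-dual-alg1:x} runs. The increase in the primal objective is $C_{k_t^*}\cdot\bigl(\tfrac{S}{C_{k_t^*}}+\tfrac{1}{\theta C_{k_t^*}}\bigr)+\tfrac{1}{N}\sum_{i=1}^N(1-S)=S+\tfrac{1}{\theta}+1-S=1+\tfrac{1}{\theta}$, while the dual objective rises by $\sum_{i=1}^N y_{i,j}(t)=1$. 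Hence the algorithm inflates the primal over the dual by a constant factor $1+1/\theta$ throughout its run.

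Step~3 (dual feasibility) is where the choice $\theta=(1+1/C_M)^{\lfloor C_1\rfloor}-1$ is forced, and I expect this to be the main obstacle. The dual constraint at $(k,t)$ reads $\sum_i\mathbf{1}_{i,k}(t)\sum_j\sum_{\tau=t}^T y_{i,j}(\tau)\leq C_k$. Since $y_{i,j}(\tau)\in\{0,1/N\}$ is set identically across users, the LHS equals $\tfrac{n_k(t)}{N}\sum_{\tau\geq t}m_\tau$, where $n_k(t)=|\{i:\mathbf{1}_{i,k}(t)=1\}|$ and $m_\tau$ counts the indices $j\leq\tau$ for which Line~\ref{primal-dual-alg1:condition} fires in slot~$\tau$. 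The core estimate will be an analysis of the multiplicative-additive recurrence $S\mapsto(1+1/C_{k_\tau^*})S+1/(\theta C_{k_\tau^*})$ induced by Line~\ref{primal-dual-alg1:x}: starting from $S=0$, the sum $\sum_{\tau'=j}^\tau x_{k_{\tau'}^*}(\tau')$ reaches $1$ after at most $\log(1+\theta)/\log(1+1/C_{k_\tau^*})\leq\lfloor C_1\rfloor\cdot C_{k_\tau^*}/C_M$ firings, which together with a charging argument that attributes each firing in slot~$\tau$ to a unit of cost $C_{k_\tau^*}\leq C_k$ will produce the bound $\sum_{\tau\geq t}m_\tau\leq N C_k/n_k(t)$. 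The delicate part is handling the case $k<k_t^*$, where I will argue that $n_k(t)<N$ forces $k_t^*$-firings to be charged against a larger budget through the geometric blow-up of $S$.

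Step~4 (putting it together). Weak duality gives $\text{dual value}\leq\text{primal LP optimum}\leq OPT(\mathbf{s})$, since the LP is a relaxation of~(\ref{integer-program}). Combined with Step~2, the primal value computed by Alg.~\ref{primal-dual-alg1} is at most $(1+1/\theta)\cdot\text{dual value}\leq(1+1/\theta)\cdot OPT(\mathbf{s})$, which is the claimed bound after substituting $\theta=(1+1/C_M)^{\lfloor C_1\rfloor}-1$. The heavy lifting lies entirely in Step~3; Steps~1, 2, and~4 are short calculations once feasibility of the dual is certified.
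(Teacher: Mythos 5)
Your overall architecture---LP relaxation, a per-iteration primal/dual increment ratio of $1+1/\theta$, then weak duality---is exactly the paper's, and your Steps 1, 2, and 4 are correct and match the paper's Appendix~\ref{appendix:theroem:competitive-ratio1} essentially line for line. The gap is in Step~3, which you rightly flag as the crux but whose sketch would not go through as written. First, your $m_\tau$ counts all iterations $j\le\tau$ that fire in slot $\tau$, whereas the dual constraint in Eq.~(\ref{dual-program:const1}) at a pair $(k,t)$ only involves $y_{i,j}(\tau)$ with $j\le t$ and $\tau\ge t$. With your definition, the target bound $\sum_{\tau\ge t}m_\tau\le NC_k/n_k(t)$ is false: a fresh packet $j=\tau$ arrives every slot and fires whenever $x_{k^*_\tau}(\tau)<1$, so the left side grows without bound in $T$ while the right side is a constant. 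Second, the per-packet recurrence plus a charging argument is the wrong bookkeeping: a per-packet firing count of order $\lfloor C_1\rfloor$ summed over the $t$ packets gives $t\lfloor C_1\rfloor$, far too weak, and the compensating factor $n_k(t)/N$ you hope to extract (and the case analysis for $k<k^*_t$) is not available in general.

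The correct argument is simpler and needs no dependence on $n_k(t)$ at all. Bound $\mathbf{1}_{i,k}(t)\le 1$ and note that $\sum_{i=1}^N y_{i,j}(\tau)=1$ exactly when iteration $j$ fires in slot $\tau$ (Line~\ref{primal-dual-alg1:y1}), so the left side of Eq.~(\ref{dual-program:const1}) is at most the number of firing pairs $(j,\tau)$ with $j\le t\le\tau$. Every such pair satisfies $\sum_{\tau'=j}^{\tau}x_{k^*_{\tau'}}(\tau')<1$ and hence $S:=\sum_{\tau'=t}^{\tau}x_{k^*_{\tau'}}(\tau')<1$ just before the update, and each such firing drives this \emph{one common, nondecreasing} quantity through $S\mapsto(1+1/C_{k^*_\tau})S+1/(\theta C_{k^*_\tau})\ge(1+1/C_M)S+1/(\theta C_M)$, so after $\lfloor C_1\rfloor$ firings $S\ge\bigl((1+1/C_M)^{\lfloor C_1\rfloor}-1\bigr)/\theta=1$ and no further pair can fire. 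Hence the count is at most $\lfloor C_1\rfloor\le C_1\le C_k$ for \emph{every} $k$, by monotonicity of $C_k$; this is precisely the paper's inequality (b), which it defers to \cite{tseng2019online}. With dual feasibility so established, your Steps 1, 2, and 4 complete the proof exactly as the paper does.
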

\begin{proof}	
See Appendix~\ref{appendix:theroem:competitive-ratio1}.
\end{proof}

\subsection{Online Scheduling Algorithm}\label{subsection:online}
\begin{algorithm}[!t]
	\SetAlgoLined 
	\SetKwFunction{Union}{Union}\SetKwFunction{FindCompress}{FindCompress} \SetKwInOut{Input}{input}\SetKwInOut{Output}{output}
	%	\Input{$c$.}
	%	\Output{$d(t)$.}
	%
	
	$x_{\text{pre-sum}}, x_{\text{sum}},  x_{k}(t)\leftarrow 0$  for all~$k$,~$t$\; \label{online-alg1:initial}
	
	$\theta \leftarrow (1+\frac{1}{C_M})^{\lfloor C_1 \rfloor}-1$\;
	Choose a uniformly random number $u \in [0,1)$\; \label{online-alg1:random}
	\tcc{For each slot~$t$, do as follows:}
	
	Identify a minimum~$k$ (denoted by $k^*_t$) such that $\mathbf{1}_{i,k}(t)=1$ for all $i=1, \cdots, N$\; \label{online-alg1:max}

	\For{$j=1$ \KwTo~$t$\label{online-alg1:for1}}{
		\If{$\sum_{\tau=j}^{t}x_{k^*_{\tau}}(\tau)<1$}{		
			$x_{k^*_t}(t) \leftarrow x_{k^*_t}(t)+ \frac{1}{C_{k^*_t}}\sum_{\tau=j}^{t} e(\tau)+\frac{1}{\theta \cdot C_{k^*_t}}$\; 	 \label{online-alg1:x}			
		}
		
	}
	$x_{\text{pre-sum}} \leftarrow x_{\text{sum}}$\;  \label{online-alg1:pre-sum}
	$x_{\text{sum}} \leftarrow x_{\text{sum}}+\min\{x_{k^*_t}(t),1\}$\; \label{online-alg1:sum}
	
	\uIf{$x_{\text{pre-sum}} \leq u < x_{\text{sum}}$\label{online-alg1:condition}}{
		$d(t) \leftarrow k^*_t$\; \label{online-alg1:tx}
		$u \leftarrow u+1$\;  \label{online-alg1:u+1}
	} 
	\Else{
		$d(t) \leftarrow 0$\; \label{online-alg1:idle}
	}
	\label{online-alg1:end}
	
	\caption{Online scheduling  algorithm.}
	\label{online-alg1}
\end{algorithm}

By leveraging the value of $x_{k^*_t}(t)$ produced by Alg.~\ref{primal-dual-alg1}, this section proposes a (randomized) online scheduling algorithm in Alg.~\ref{online-alg1}. Alg.~\ref{online-alg1} updates the variable $x_{k^*_t}(t)$ in Line~\ref{online-alg1:x} in the same manner as Alg.~\ref{primal-dual-alg1}. Moreover, Alg.~\ref{online-alg1} introduces additional variables $x_{\text{pre-sum}}$ and $x_{\text{sum}}$. The value of $x_{\text{pre-sum}}$ in slot~$t$ is the cumulative value of $\min\{x_{k^*_t}(t),1\}$ until $t-1$ (see Line~\ref{online-alg1:pre-sum}), the value of $x_{\text{sum}}$ in slot~$t$ is the cumulative value of $\min\{x_{k^*_t}(t),1\}$ until slot~$t$ (see Line~\ref{online-alg1:sum}). 
%Let $\widetilde{e}_{\text{pre-sum}}(t)$ and $\widetilde{e}_{\text{sum}}(t)$ be the corresponding values at the \textit{end of slot~$t$}. 

Alg.~\ref{online-alg1} selects a uniformly random number $u \in [0,1)$ in Line~\ref{online-alg1:random}. According to Lines~\ref{online-alg1:condition} and~\ref{online-alg1:u+1}, if there exists $k \in \mathbb{N}$ such that $u+k \in [x_{\text{pre-sum}}, x_{\text{sum}})$, then the AP decides $d(t)=k^*_t$ for slot~$t$ (see Line~\ref{online-alg1:tx}), that is, the AP allocates the minimum power such that all users can receive the update; otherwise, the AP decides $d(t)=0$ (see Line~\ref{online-alg1:idle}). The idea behind Alg.~\ref{online-alg1} is that, with the uniformly random choice of $u$, the probability of broadcasting to \textit{all} users (or the probability of flushing all queues) in slot~$t$ becomes $\min\{x_{k^*_t}(t),1\}$ and the cumulative probability of flushing packet~$j$ in slot~$t$ is $\min\{\sum_{\tau=j}^t x_{k^*_{\tau}}(\tau),1\}$. A similar idea was also used to generate random numbers for a given cumulative distribution function. 

We want to emphasize that always choosing $d(t)=k^*_t$ for each transmission (such that everyone can receive the update) may not be optimal. However, we can show that the \textit{expected} competitive ratio (over the randomness of~$u$) of our algorithm can be guaranteed as follows: 

\begin{theorem} \label{theorem:online-alg1}
	The expected competitive ratio of Alg.~\ref{online-alg1} is	
	\begin{align*}
		\left( 1+ \frac{1}{(1+\frac{1}{C_{M}})^{\lfloor C_{1} \rfloor}-1}\right).
	\end{align*}	
\end{theorem}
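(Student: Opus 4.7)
The plan is to show that the expected total cost incurred by Alg.~\ref{online-alg1} is at most the primal objective value computed by Alg.~\ref{primal-dual-alg1}, and then invoke Theorem~\ref{theroem:competitive-ratio1}. Note that Alg.~\ref{online-alg1} updates $x_{k^*_t}(t)$ by exactly the same recursion as Alg.~\ref{primal-dual-alg1}, so the two algorithms produce the same sequence $\{x_{k^*_t}(t)\}$; what is new in Alg.~\ref{online-alg1} is only the randomized rounding that maps this fractional solution to a decision $d(t)\in\{0,k^*_t\}$.

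First I would compute the law of the transmission events. Because $u$ is uniform on $[0,1)$ and is incremented by one immediately after each transmission, the event $\{d(t)=k^*_t\}$ is equivalent to the event that some integer shift of $u$ lies in the length-$\min\{x_{k^*_t}(t),1\}$ interval $[x_{\text{pre-sum}},x_{\text{sum}})$. A standard uniform-rotation argument on the unit circle then gives the marginal probability $P(d(t)=k^*_t)=\min\{x_{k^*_t}(t),1\}$ and, applied to the concatenated interval $[x_{\text{sum}}^{(j-1)},x_{\text{sum}}^{(t)})$, the joint probability $P(\text{no transmission in slots }j,\ldots,t)=\max\{0,1-\sum_{\tau=j}^{t}\min\{x_{k^*_{\tau}}(\tau),1\}\}$.

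With these at hand, I would split the expected total cost into its two components as in Eq.~(\ref{eq:cost-2}). The expected transmission cost is $\sum_t C_{k^*_t}\min\{x_{k^*_t}(t),1\}\leq\sum_t C_{k^*_t}x_{k^*_t}(t)$, which is at most the fractional transmission cost of Alg.~\ref{primal-dual-alg1}. For the age, the key structural observation is that $d(t)\in\{0,k^*_t\}$ together with $\mathbf{1}_{i,k^*_t}(t)=1$ for every user $i$ forces every virtual queue to flush simultaneously whenever the AP broadcasts. Hence $a_i(t)=\sum_{j=1}^{t}\mathbf{1}[\text{no transmission in }[j,t]]$ is deterministic in $i$, and taking expectations yields $E[a_i(t)]=\sum_{j=1}^{t}\max\{0,1-\sum_{\tau=j}^{t}\min\{x_{k^*_{\tau}}(\tau),1\}\}$.

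The step I expect to be the main obstacle is showing that this expected age exactly matches $\sum_{j=1}^{t}z_{i,j}(t)$ from Alg.~\ref{primal-dual-alg1}, despite the truncation introduced by the $\min$. The resolution is a short per-$j$ case split. If $\sum_{\tau=j}^{t}x_{k^*_{\tau}}(\tau)<1$, then every summand is below one, the $\min$ is inactive, and both quantities equal $1-\sum_{\tau=j}^{t}x_{k^*_{\tau}}(\tau)$. Otherwise, either some individual $x_{k^*_{\tau}}(\tau)\geq 1$, in which case the truncated sum is already at least one, or all summands are below one and the truncated sum equals the untruncated sum; in either event both quantities vanish. Combining the two bounds gives that the expected total cost of Alg.~\ref{online-alg1} is at most the primal objective value produced by Alg.~\ref{primal-dual-alg1}, which by Theorem~\ref{theroem:competitive-ratio1} is at most $(1+1/((1+1/C_M)^{\lfloor C_1\rfloor}-1))\cdot OPT(\mathbf{s})$, establishing the claimed expected competitive ratio.
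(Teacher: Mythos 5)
Your proposal is correct and takes essentially the same route as the paper: it bounds the expected transmission cost and the expected age of Alg.~\ref{online-alg1} by the terms $C_{k^*_t}x_{k^*_t}(t)$ and $\sum_{j=1}^t z_{i,j}(t)$ of the primal objective computed by Alg.~\ref{primal-dual-alg1} and then invokes Theorem~\ref{theroem:competitive-ratio1}, which is precisely the decomposition the paper delegates to \cite[Theorem 6]{tseng2019online} and which you instead prove directly via the uniform-rotation argument. One minor imprecision: since $z_{i,j}(t)$ is fixed in iteration~$j$ before the later increments of $x_{k^*_t}(t)$ in iterations $j+1,\dots,t$, the expected age only satisfies $E[a_i(t)]\le\sum_{j=1}^t z_{i,j}(t)$ rather than ``exactly matches,'' but this is the direction your bound requires, so the argument stands.
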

\begin{proof}
See Appendix~\ref{appendix:theorem:online-alg1}.
\end{proof}

The competitive ratio in Theorem~\ref{theorem:online-alg1} is independent of the number $N$ of users, the entire channel state pattern~$\mathbf{s}$ {(i.e., their movement including directions, speeds, etc.)}, and the time horizon~$T$. Moreover, when $C_1$ is large (compared with $C_M-C_1$), the expected competitive ratio approaches $e/(e-1)\approx 1.58$.

Regarding the computational complexity of Alg.~\ref{online-alg1}, we note that Line~\ref{online-alg1:max} takes $O(\log N)$ for the minimum search. Moreover, following \cite{tseng2019online}, we can show that there are at most $\sqrt{C_1}$ iterations such that the condition in Line~\ref{primal-dual-alg1:condition} holds; that is, we can revise the iteration in Line~\ref{primal-dual-alg1:condition} to start from $j=\max\{t-\lfloor \sqrt{C_1} \rfloor, 1\}$. Thus, the computational complexity of Alg.~\ref{online-alg1} for each slot is $\max\{O(\log N), O(\sqrt{C_1})\}$.

%While the first asymptotic result in Theorem~\ref{theorem:online-alg1} considers the  case when power is critical, the second one considers the case when downloading takes significant amount of power like poor channel environment caused by high mobility.   
%The competitive ratio of $e/(e-1)$ is  asymptotically optimal, because the TCP-ACK problem in \cite{online-compuatation-naor}  is a special case of our scheduling problem for the virtual queueing network and the asymptotic optimal competitive ratio for that problem is  $e/(e-1)$. 

\subsection{Channel-Agnostic Scheduling Algorithm} \label{subsection:channe-agnostic}
This section develops a channel-agnostic scheduling algorithm by modifying Algs.~\ref{primal-dual-alg1} and~\ref{online-alg1}. Instead of searching for the minimum power level such that all users can receive the update (in Line~\ref{primal-dual-alg1:max} of Alg.~\ref{primal-dual-alg1} and Line~\ref{online-alg1:max} of Alg.~\ref{online-alg1}), the channel-agnostic scheduling algorithm modifies both lines by setting $k^*_t \leftarrow M$ for all slot~$t$. Moreover, the channel-agnostic scheduling algorithm modifies Line~\ref{online-alg1:tx} of Alg.~\ref{online-alg1} to $d(t)\leftarrow M$. This algorithm increases the transmission probability more slowly than Alg.~\ref{online-alg1} does in Line~\ref{online-alg1:x}, but always allocates the maximum power level when deciding to transmit. With these modifications, the channel-agnostic scheduling algorithm needs $C_1$ and $C_M$ only, with no channel state information. Following the proofs of Theorems~\ref{theroem:competitive-ratio1} and~\ref{theorem:online-alg1}  (with minor modifications, e.g., modify $C_{k^*_t}$ to $C_M$ in Eq.~(\ref{eq:competitive-proof})), we can show that the channel-agnostic scheduling can achieve the same competitive ratio as Alg.~\ref{online-alg1} does in Theorem~\ref{theorem:online-alg1}. Moreover, the computational complexity of the channel-agnostic scheduling algorithm for each slot is $O(\sqrt{C_1})$, independent of the number of users.

\section{Numerical Results}
While the previous section analyzes the proposed algorithms in the adversarial (worst-case) scenario, this section further validates them in stochastic scenarios. We run our algorithms under the 4-state Markov-modulated channel in Fig.~\ref{fig:channel}-(a) (with random initial states) for 10,000 slots. Moreover, we set the transmission cost to $C_k=C_1+5(k-1)$ for $k=1, \cdots, 4$.
Fig.~\ref{fig:channel}-(b) shows the ratio between the total cost incurred by the proposed algorithms and the minimum total cost for various values of $C_1$ (fixed $N=2$). The minimum total cost is obtained by an optimal offline scheduling algorithm as follows. We model our scheduling problem in the Markov-modulated channel as a Markov decision process (MDP) like~\cite{hsu2019scheduling}. Then, we identify an optimal scheduling algorithm by the value iteration algorithm. From Fig.~\ref{fig:channel}-(b), we can observe that in the stochastic scenarios our algorithms perform much better than what we analyzed in Theorem~\ref{theorem:online-alg1}.

Next,  Fig.~\ref{fig:sim1} simulates the time-average total cost $(\sum_{t=1}^T (C_{d(t)}+\sum_{i=1}^N a_i(t)/N))/T$ and the time-average age $(\sum_{t=1}^T (\sum_{i=1}^N a_i(t)/N))/T$ for various values of $C_1$ (fixed $N=5$), and  Fig.~\ref{fig:sim2} does  for various values of $N$ (fixed $C_1=30$). Because of the curse of dimensionality of MDPs, in the setting we compare the proposed algorithms with two online greedy algorithms, Greedy~1 and Greedy~2. Greedy~1 chooses $d(t)$ to minimize the total cost $C_{d(t)}+(\sum_{i=1}^N a_i(t)/N)$ at the end of slot~$t$. Let $g_i(t)$ be the \textit{cumulative} age cost for user~$i$ from the slot when it receives the previous update until the present slot~$t$, that is, $g_i(t)=0$ if user~$i$ can receive the update in slot~$t$; otherwise, $g_i(t)=g_i(t-1)+a_i(t)$. Greedy~2 chooses $d(t)$ to minimize $C_{d(t)}+(\sum_{i=1}^N g_i(t)/N)$ at the end of slot~$t$. We can observe that Greedy~2 outperforms Greedy~1. This is because Greedy~1 neglects the cost incurred by the stale information in the previous slots.  We can also observe that our algorithms significantly outperform the greedy algorithms in both cost (in Figs.~\ref{fig:sim1}-(a) and~\ref{fig:sim2}-(a)) and age (in Figs.~\ref{fig:sim1}-(b) and~\ref{fig:sim2}-(b)). Moreover, the online scheduling algorithm only has a marginal improvement over the channel-agnostic scheduling algorithm,  even though it has the present channel states of all users.

\begin{figure}[t]
		\begin{minipage}{.23\textwidth}
		\centering
		\vspace{1.8cm}
	\includegraphics[width=\textwidth]{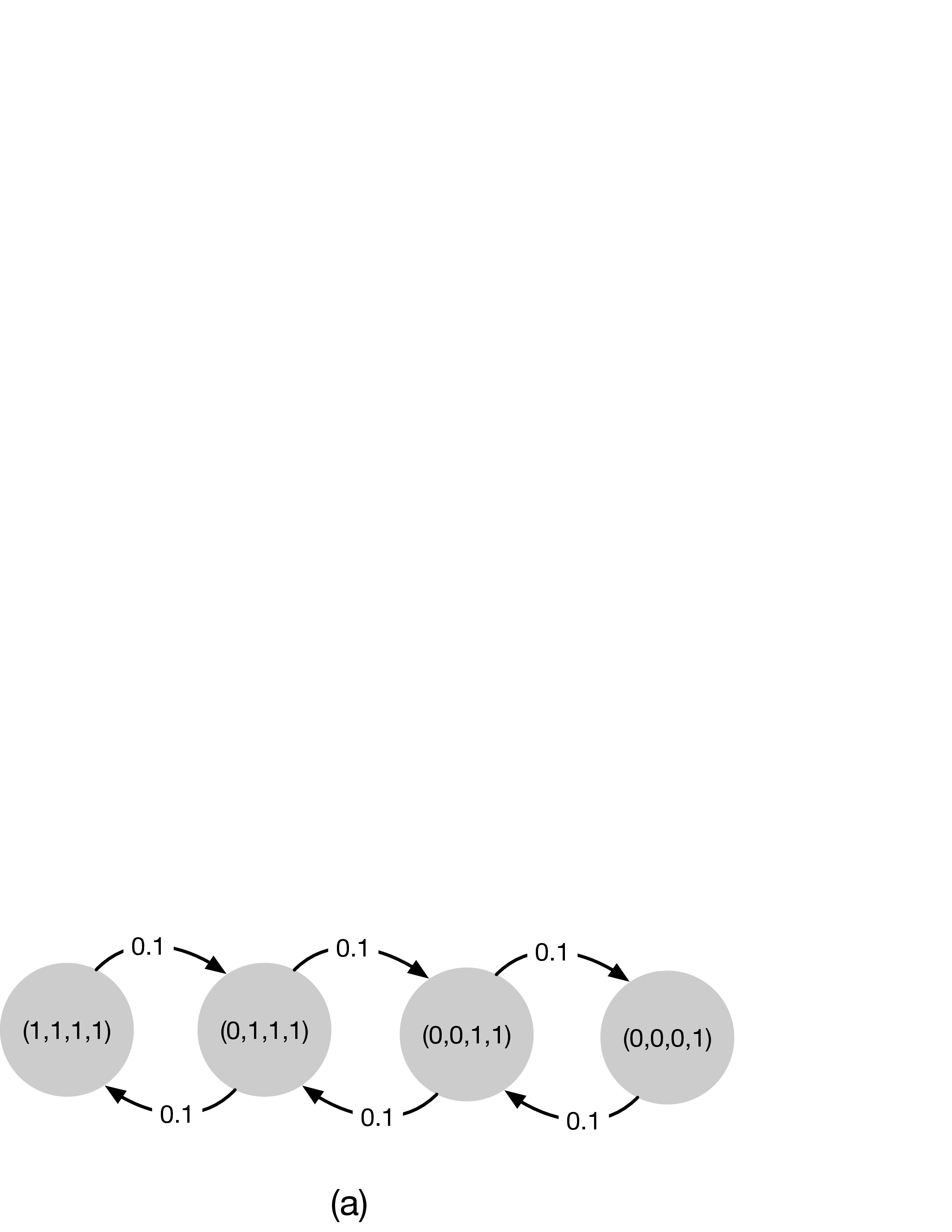}
	\end{minipage}\hfill
	\begin{minipage}{.23\textwidth}
	\centering
	\includegraphics[width=\textwidth]{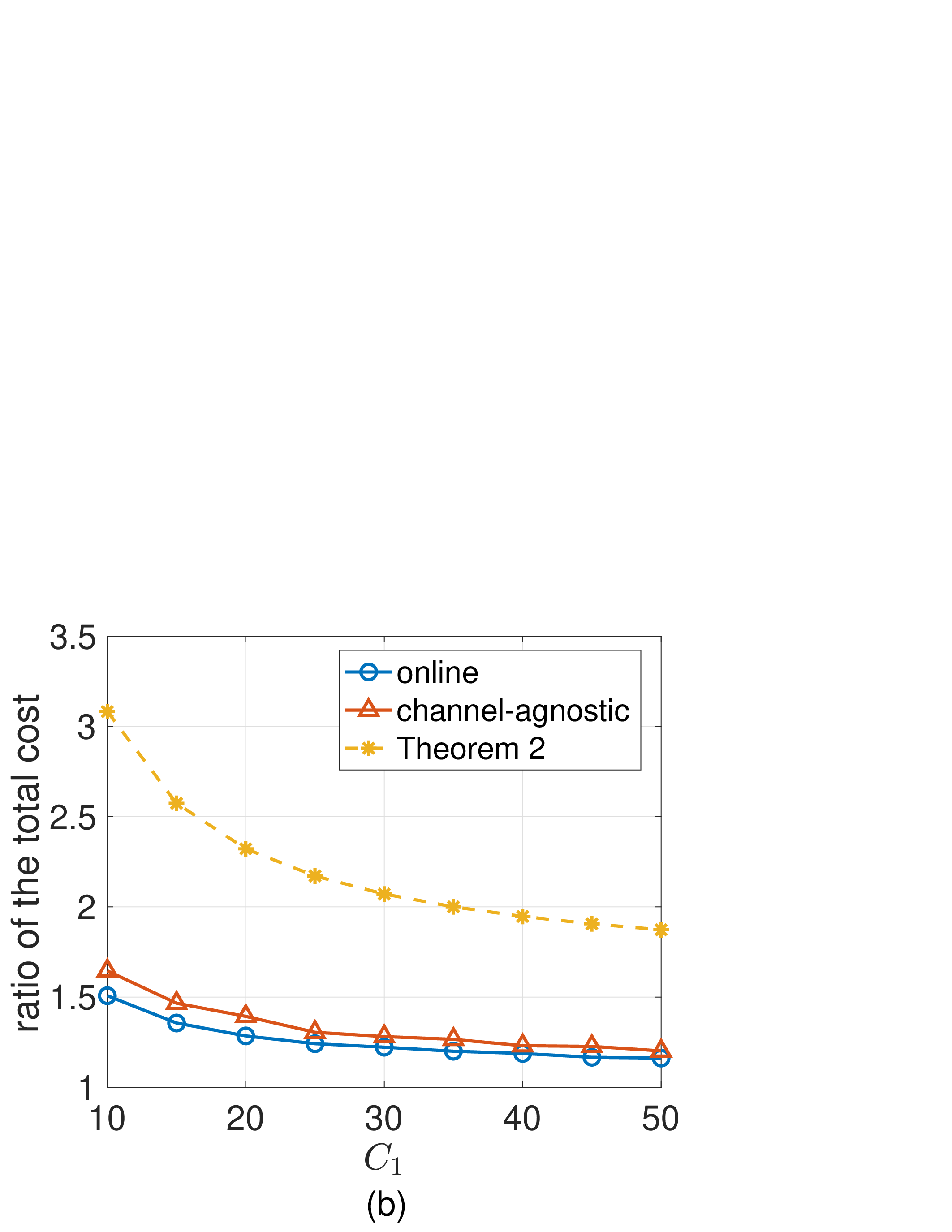}
\end{minipage}\hfill
	\caption{(a) 4-state Markov-modulated channel for $s_i(t)$; (b) ratio to the minimum total cost for $N=2$.}
	\label{fig:channel}
\end{figure}

\begin{figure}[t]
	\begin{minipage}{.23\textwidth}
		\centering
		\includegraphics[width=\textwidth]{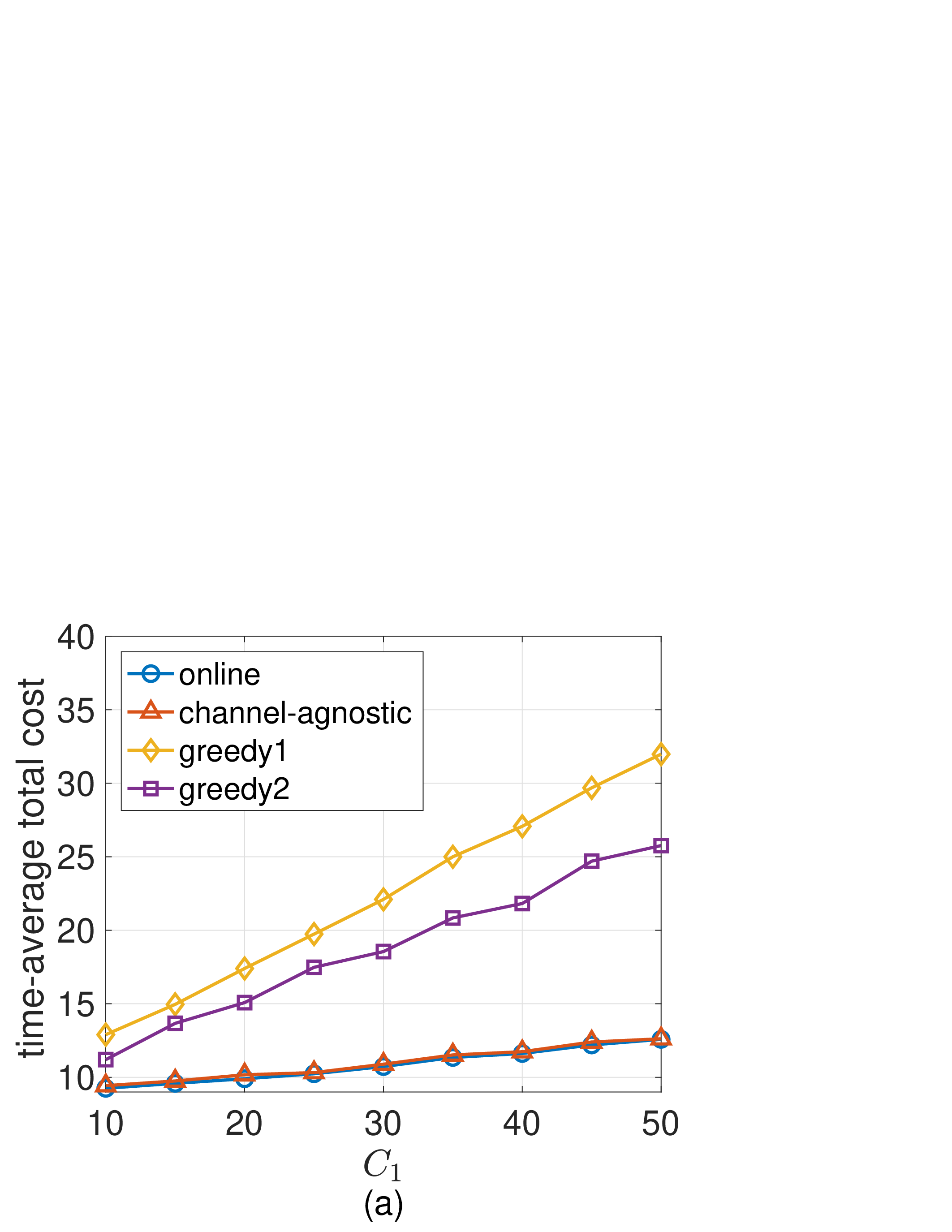}
	\end{minipage}\hfill
	\begin{minipage}{.23\textwidth}
		\centering
		\includegraphics[width=\textwidth]{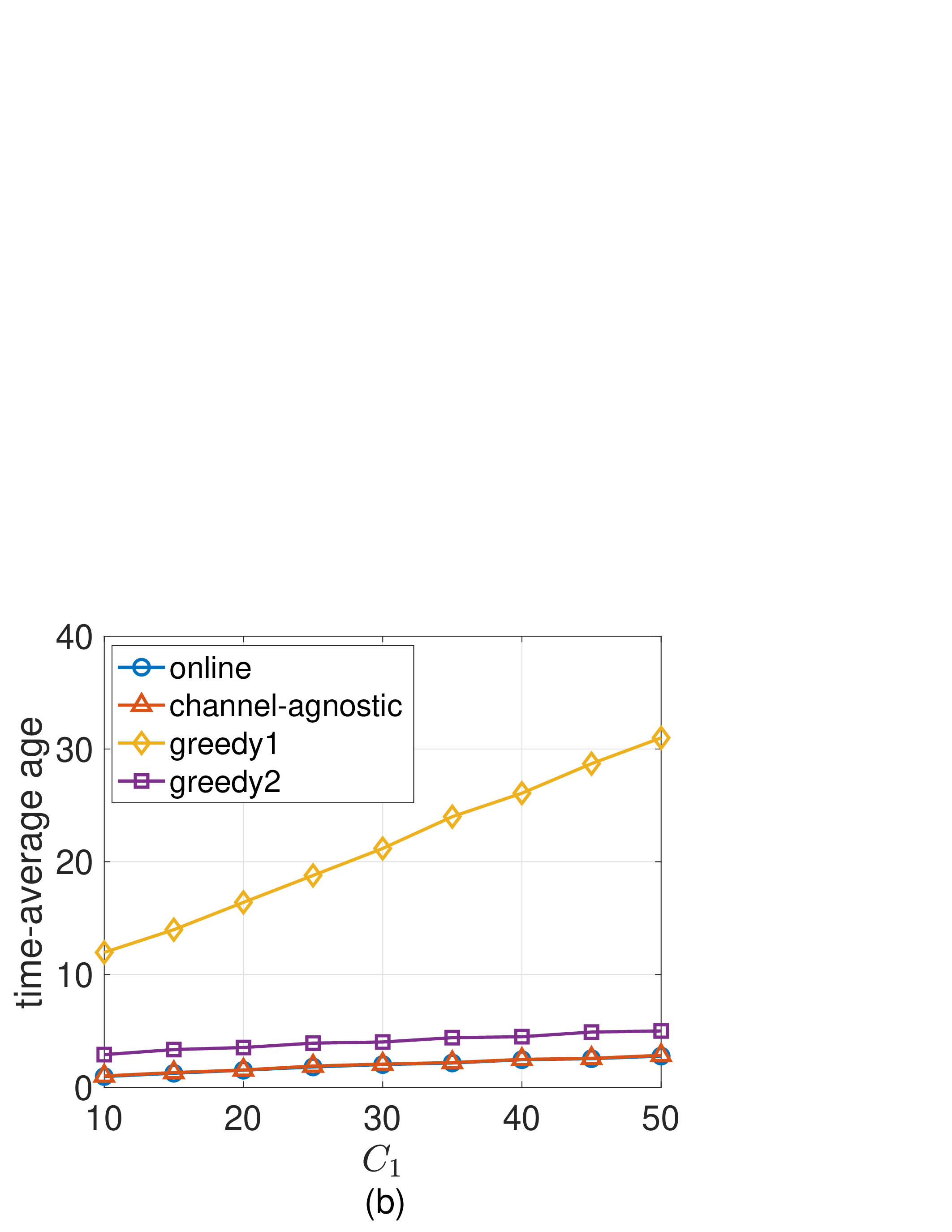}
	\end{minipage} \hfill
	\caption{(a) Time-average total cost and (b) time-average age for $N=5$.}
	\label{fig:sim1}
\end{figure}
\begin{figure}[t]
	\begin{minipage}{.24\textwidth}
		\centering
		\includegraphics[width=\textwidth]{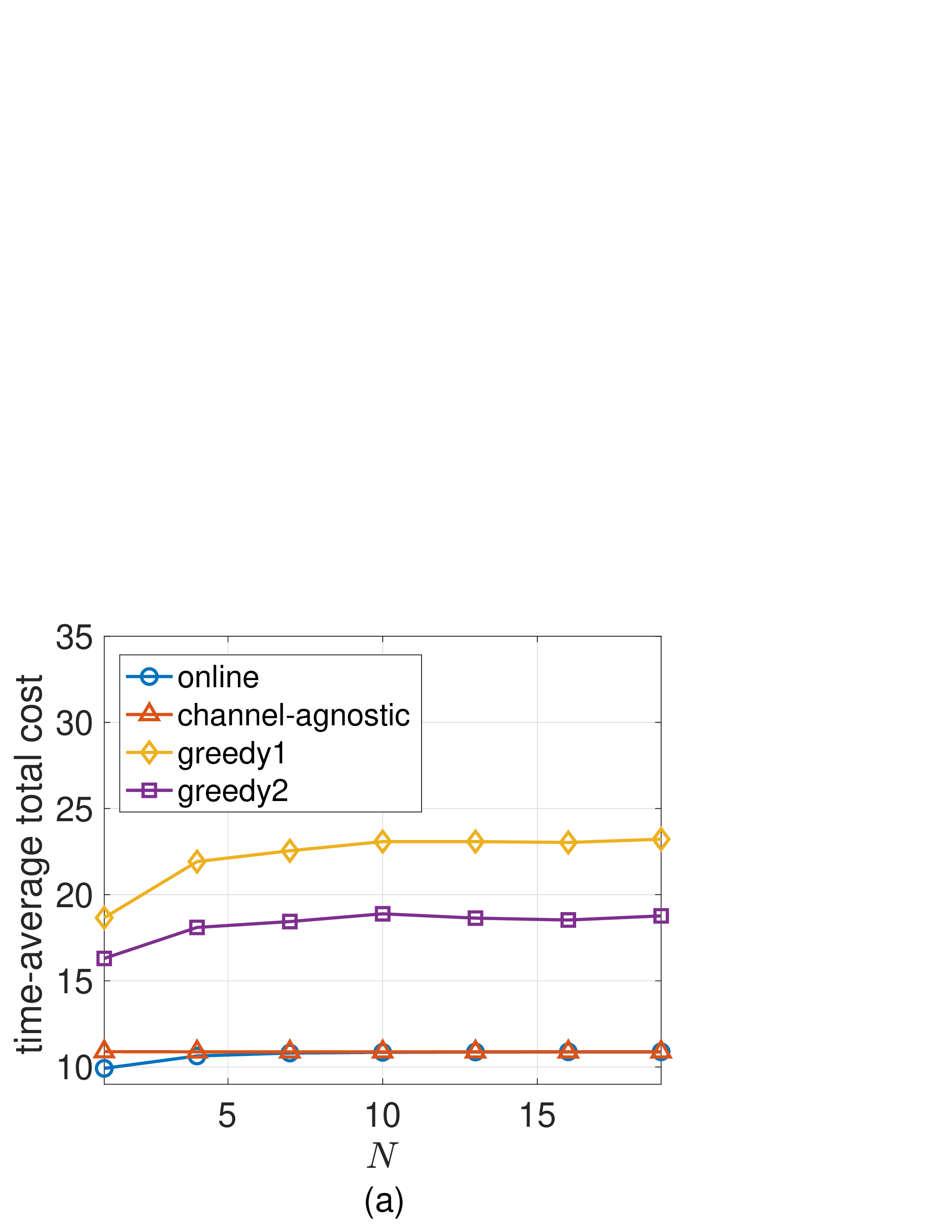}
	\end{minipage}\hfill
	\begin{minipage}{.24\textwidth}
		\centering
		\includegraphics[width=\textwidth]{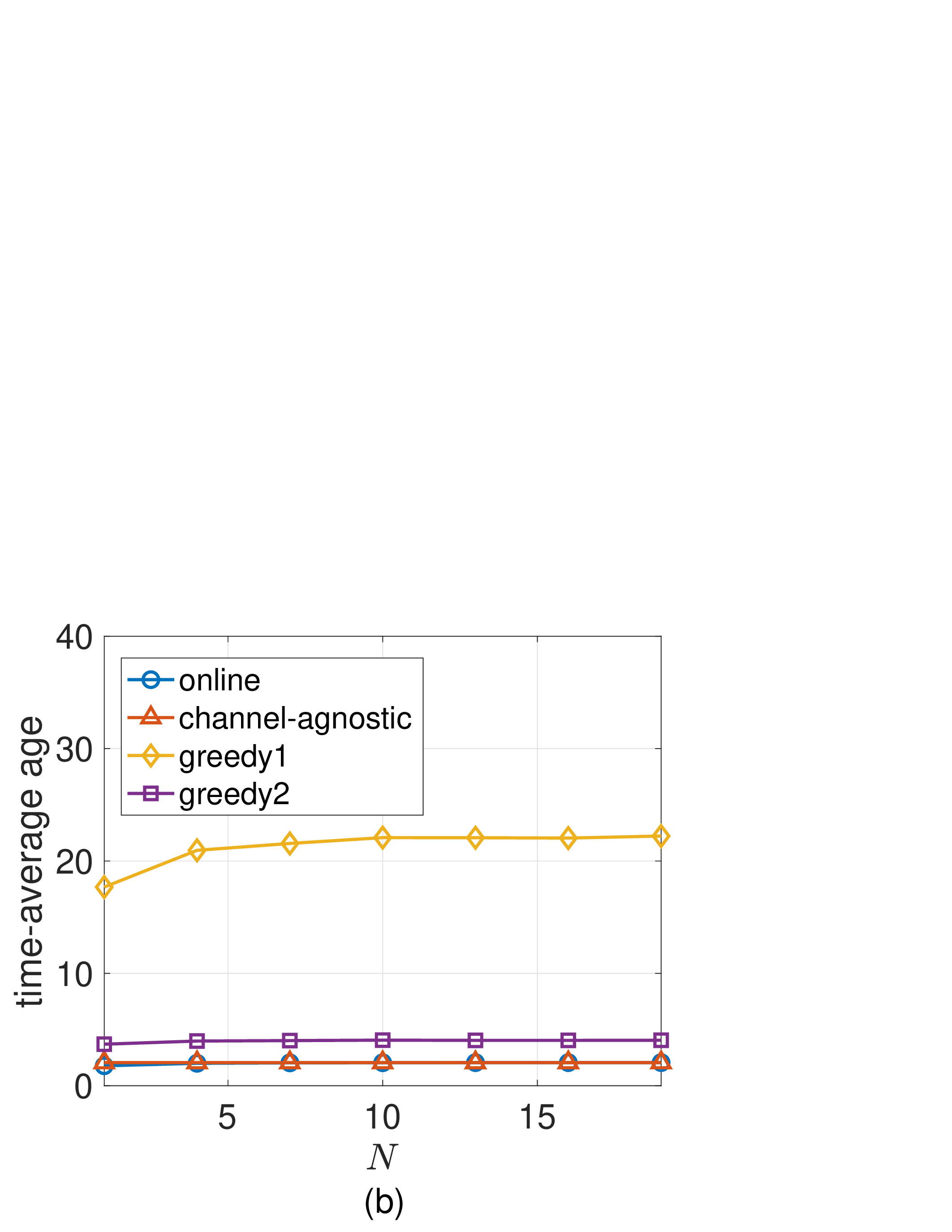}
	\end{minipage} \hfill
	\caption{(a) Time-average total cost and (b) time-average age for  $C_1=30$.}
	\label{fig:sim2}
\end{figure}

\section{Conclusion}
This study developed two scheduling algorithms to strike a balance between the timeliness of information (for the users) and the power consumption (for the AP). The proposed algorithms can achieve a constant competitive ratio, independent of the number of users and their movement. While this study focused on the competitive analysis, it is an interesting extension to explicitly derive the resulting total cost.

\appendices
\section{Proof of Theorem~\ref{theroem:competitive-ratio1}}\label{appendix:theroem:competitive-ratio1}
%See our technical report in \cite{.} for verifying that Alg.~\ref{primal-dual-alg1} produces a feasible solution to the primal program and the dual program. 

The proof needs the following  technical lemma.
\begin{lemma}
	Alg.~\ref{primal-dual-alg1} produces a feasible solution to the primal program and the dual program.
\end{lemma}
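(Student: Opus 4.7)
The plan is to verify primal and dual feasibility separately, with the dual half being the substantive one.

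Primal feasibility. The non-negativity constraints (the relaxed version of Eq.~(\ref{interger-program:const2})) hold because Line~\ref{primal-dual-alg1:initial} initializes every variable to $0$ and all subsequent updates add non-negative quantities; in particular, Line~\ref{primal-dual-alg1:z1} runs only when the residual $1-\sum_{\tau=j}^t x_{k^*_\tau}(\tau)$ is strictly positive. For the covering constraint Eq.~(\ref{interger-program:const1}), I would exploit that Alg.~\ref{primal-dual-alg1} only ever writes to the single coordinate $x_{k^*_\tau}(\tau)$ in each slot~$\tau$, and that $\mathbf{1}_{i,k^*_\tau}(\tau)=1$ for every $i$ by the very definition of $k^*_\tau$. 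Together these collapse $\sum_{\tau=j}^t\sum_{k=1}^M \mathbf{1}_{i,k}(\tau)x_k(\tau) = \sum_{\tau=j}^t x_{k^*_\tau}(\tau)$, independent of $i$. A case split on the Line~\ref{primal-dual-alg1:condition} test then finishes: if the test fails, this sum already exceeds $1$; if it holds, Line~\ref{primal-dual-alg1:z1} pads $z_{i,j}(t)$ by exactly the residual, and the same padding works for every user simultaneously.

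Dual feasibility. The box constraint Eq.~(\ref{dual-program:const2}) is immediate since Line~\ref{primal-dual-alg1:y1} only writes values in $\{0,1/N\}$. For the main constraint Eq.~(\ref{dual-program:const1}), I would first observe that Line~\ref{primal-dual-alg1:y1} assigns $y_{i,j}(\tau)$ uniformly across $i$, so $\Gamma(t):=\sum_{j=1}^t\sum_{\tau=t}^T y_{i,j}(\tau)$ does not depend on $i$. The constraint reduces to $\Gamma(t)\cdot\sum_{i=1}^N \mathbf{1}_{i,k}(t)\leq C_k$ for every $(k,t)$, and using $\sum_i \mathbf{1}_{i,k}(t)\leq N$ together with $C_k\geq C_1$ it suffices to prove $N\cdot\Gamma(t)\leq C_1$. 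The quantity $N\cdot\Gamma(t)$ exactly counts index pairs $(j,\tau)$ with $j\leq t\leq \tau$ at which Line~\ref{primal-dual-alg1:condition} triggered, since every trigger contributes exactly $1/N$ to $\Gamma(t)$. To bound this count I would fix $j$ and follow the partial sum $\sigma:=\sum_{\tau'=j}^\tau x_{k^*_{\tau'}}(\tau')$ across its triggers: each trigger executes Line~\ref{primal-dual-alg1:x}, mapping $\sigma\mapsto (1+1/C_{k^*_\tau})\sigma + 1/(\theta\,C_{k^*_\tau})$. Unrolling this recurrence with $C_{k^*_\tau}\leq C_M$ gives $\sigma\geq((1+1/C_M)^n-1)/\theta$ after $n$ triggers, and the calibration $\theta=(1+1/C_M)^{\lfloor C_1\rfloor}-1$ in Line~\ref{primal-dual-alg1:constant} forces $\sigma\geq 1$ once $n=\lfloor C_1\rfloor$, so no further triggers can occur for that~$j$.

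The main obstacle is aggregating the per-$j$ bound into the global bound $N\cdot\Gamma(t)\leq C_1$: a naive sum over $j\leq t$ yields only $t\lfloor C_1\rfloor$, which is far too loose. The right argument must exploit that a single update of $x_{k^*_\tau}(\tau)$ in slot~$\tau$ simultaneously advances $\sigma$ for every still-qualifying packet $j\leq \tau$, tying the per-slot trigger count (bounded by $O(\sqrt{C_1})$, as remarked after Theorem~\ref{theorem:online-alg1}) to the per-packet trigger count through an amortized double-counting. I expect this balancing step to be the bulk of the technical work; the remaining manipulations are routine bookkeeping around the recurrence.
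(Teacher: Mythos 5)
Your overall structure coincides with the paper's: primal feasibility via Line~\ref{primal-dual-alg1:z1} together with the collapse of $\sum_{\tau=j}^{t}\sum_{k}\mathbf{1}_{i,k}(\tau)x_k(\tau)$ to $\sum_{\tau=j}^{t}x_{k^*_\tau}(\tau)$, the box constraint in Eq.~(\ref{dual-program:const2}) via Line~\ref{primal-dual-alg1:y1}, and the reduction of Eq.~(\ref{dual-program:const1}) to counting the pairs $(j,\tau)$ with $j\le t\le\tau$ at which Line~\ref{primal-dual-alg1:condition} fires. All of that is correct and is exactly what the paper does (the paper then delegates the count, its step (b), to \cite{tseng2019online}). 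The genuine gap is in that count. Your per-packet recurrence only yields $\lfloor C_1\rfloor$ triggers \emph{per fixed $j$}, which you rightly note is too weak, but the repair you sketch --- an amortized double-counting between the per-slot trigger number $O(\sqrt{C_1})$ and the per-packet trigger number --- is not the right move: those two bounds do not combine into the required global bound of $\lfloor C_1\rfloor$ in any obvious way, and the within-slot monotonicity such an argument would lean on is broken because each trigger in slot~$\tau$ enlarges $x_{k^*_\tau}(\tau)$ before the next iteration's test is evaluated.

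The fix is a one-line change to the recurrence you already wrote: anchor the partial sum at the \emph{constraint's} slot~$t$ rather than at the packet's arrival slot~$j$. Let $S:=\sum_{\tau'=t}^{\tau}x_{k^*_{\tau'}}(\tau')$, tracked chronologically over $\tau\ge t$. Any trigger of a pair $(j,\tau)$ with $j\le t\le\tau$ requires $\sum_{\tau'=j}^{\tau}x_{k^*_{\tau'}}(\tau')<1$ and hence $S<1$ (the latter sum has fewer non-negative terms); moreover the update in Line~\ref{primal-dual-alg1:x} adds at least $\frac{1}{C_{k^*_\tau}}S+\frac{1}{\theta C_{k^*_\tau}}$ to $x_{k^*_\tau}(\tau)$, again because $j\le t$. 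Since $C_{k^*_\tau}\le C_M$ ($C_k$ is non-decreasing in $k$), every such trigger --- regardless of which $j$ caused it --- pushes the single quantity $S$ through $S\mapsto(1+1/C_M)S+1/(\theta C_M)$ at least, while all other updates only increase $S$. After $\lfloor C_1\rfloor$ such triggers, $S\ge\bigl((1+1/C_M)^{\lfloor C_1\rfloor}-1\bigr)/\theta=1$ by Line~\ref{primal-dual-alg1:constant}, so no further pair with $j\le t$ can fire. This gives $N\cdot\Gamma(t)\le\lfloor C_1\rfloor\le C_k$ directly, with no amortization and with the rest of your write-up unchanged.
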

\begin{proof}
	The solution produced by Alg.~\ref{primal-dual-alg1} satisfies the constraints in Eq.~(\ref{interger-program:const1}) according to Line~\ref{primal-dual-alg1:z1}. Moreover,  the solution produced by Alg.~\ref{primal-dual-alg1} also satisfies the constraints in Eq.~(\ref{dual-program:const1}) because, for each power level~$k$ and slot~$t$, we can obtain
	\begin{align*}
		&\sum_{i=1}^N\mathbf{1}_{i,k}(t) \sum_{j=1}^t\sum_{\tau=t}^{T} y_{i,j}(\tau)\\
		\leq &\sum_{i=1}^N \sum_{j=1}^t\sum_{\tau=t}^{T} y_{i,j}(\tau)\\		
		\mathop{=}^{(a)}&\sum_{j=1}^t \sum_{\tau=t}^{T} \left(\sum_{i=1}^N \frac{1}{N}\right)\cdot \mathbf{1}_{\text{update in iteration~$j$ of slot $\tau$}}\\
		=&\sum_{j=1}^t \sum_{\tau=t}^{T}  \mathbf{1}_{\text{update in iteration~$j$  of slot $\tau$}}\\
		\mathop{\leq}^{(b)} & \lfloor C_1 \rfloor \leq C_k,
	\end{align*}
	where $\mathbf{1}_{\text{update in iteration~$j$ of slot $\tau$}}$  indicates if the condition in Line~\ref{primal-dual-alg1:condition} holds for iteration~$j$ of slot~$t$ (such that $y_{i,j}(\tau)$ gets updated). The equality in (a) is according to Line~\ref{primal-dual-alg1:y1}. Moreover, the inequality in (b) follows the lines in \cite{tseng2019online}.
	
	Finally, the solution produced by Alg.~\ref{primal-dual-alg1}  also satisfies the constraints in Eq.~(\ref{dual-program:const2}) according to Line~\ref{primal-dual-alg1:y1}.
\end{proof}
Then, let $\Delta \mathscr{P}_j(t)$ be the increment of  the primal objective value in Eq.~(\ref{inter-program:objective}) caused by  Alg.~\ref{primal-dual-alg1}'s  iteration~$j$ of slot~$t$. First, if the condition in Line~\ref{primal-dual-alg1:condition} of Alg.~\ref{primal-dual-alg1} holds in iteration~$j$ of slot~$t$, then $\Delta \mathscr{P}_j(t)$ becomes 
\begin{align} 
	\Delta \mathscr{P}_j(t)=&C_{k^*_{t}}(t)\underbrace{\left(\frac{1}{C_{k^*_t}} \sum_{\tau=i}^{t} x_{k^*_{\tau}}(\tau)+\frac{1}{\theta\cdot C_{k^*_t}}\right)}_{(a)}\nonumber\\
	&+\frac{1}{N}\underbrace{\left(1- \sum_{\tau=i}^{t}x_{k^*_{\tau}}(\tau)\right)N}_{(b)}
	= 1+\frac{1}{\theta},\label{eq:competitive-proof}
\end{align} 
where (a) is according to Line~\ref{primal-dual-alg1:x}; (b) is according to Line~\ref{primal-dual-alg1:z1}. Second, if the condition in Line~\ref{primal-dual-alg1:condition} fails in iteration~$j$ of slot~$t$, then $\Delta \mathscr{P}_j(t)=0$. 

Similarly, let  $\Delta \mathscr{D}_j(t)$ be the increment of   the dual objective value in Eq.~(\ref{dual-program:objective}) caused by Alg.~\ref{primal-dual-alg1}'s  iteration~$j$ of slot~$t$. First, if the condition in Line~\ref{primal-dual-alg1:condition} holds in iteration~$j$ of slot~$t$, then $\Delta \mathscr{D}_j(t)=(1/N)\cdot	N=1$ according to  Line~\ref{primal-dual-alg1:y1}. Second, if the condition in Line~\ref{primal-dual-alg1:condition} fails in   iteration~$j$ of slot~$t$, then $\Delta \mathscr{D}_j(t)=0$. Thus, we can establish  that $\Delta \mathscr{P}_j(t)=\left(1+\frac{1}{\theta} \right) \mathscr{D}_j(t)$
for all~$j$ and~$t$. 

Let $\mathscr{P}$ and $\mathscr{D}$  be the primal objective value and the dual objective value computed by Alg.~\ref{primal-dual-alg1}, respectively. Then,  $\mathscr{P}=\sum_{t=1}^T\sum_{j=1}^t \Delta \mathscr{P}_j(t)$ and $\mathscr{D}=\sum_{t=1}^T\sum_{j=1}^t \Delta \mathscr{D}_j(t)$; moreover, 
\begin{align*}
	\mathscr{P}= \left(1+\frac{1}{\theta}\right) \mathscr{D} \mathop{\leq}^{(a)}  \left(1+\frac{1}{\theta}\right) OPT(\mathbf{s}),
\end{align*}
where (a) is from the weak duality theory \cite{online-compuatation-naor}. Substituting $\theta$ in the above inequality by its definition yields the theorem.  
%Moreover,  because $C_{\text{M}}=\max_t C\cdot P(s(t)) \leq C \cdot P_{\text{M}}$ and $C_{\text{m}}=C\cdot P_{\text{m}}$, we can further establish that 
%\begin{align*}
%\Delta \mathscr{P}_j(t) \leq \frac{P_{\text{M}}}{P_{\text{m}}}\left(1+\frac{1}{\theta} \right)\Delta \mathscr{D}_j(t),
%\end{align*}
%for all~$i$ and~$t$.
%Let $\mathscr{P}$ and $\mathscr{D}$  be the primal objective value and dual objective value computed by Alg.~\ref{primal-dual-alg1}, respectively. Then,  
%\begin{align*}
%	\mathscr{P} \leq\, \left(1+\frac{1}{\theta}\right) \mathscr{D} \mathop{\leq}^{(a)}\,   \left(1+\frac{1}{\theta}\right) OPT_{(\ref{primal-program})}(\mathbf{s}) 
%	\mathop{\leq}^{(b)}\,  \left(1+\frac{1}{\theta}\right) OPT(\mathbf{s}),
%\end{align*}
%where (a) is from the weak duality \cite{online-compuatation-naor}; (b) is from Eq.~(\ref{eq:opt}). Substituting $\theta$ in the above inequality by its definition yields the theorem.

\section{Proof of Theorem~\ref{theorem:online-alg1}}\label{appendix:theorem:online-alg1}
Following the proof of \cite[Theorem 6]{tseng2019online}, we can obtain that the expected transmission cost incurred by Alg.~\ref{online-alg1} in slot~$t$ is less than or equal to the value of $C_{k^*_t}\cdot x_{k^*_t}(t)$ in Eq.~(\ref{inter-program:objective}) computed by Alg.~\ref{primal-dual-alg1}; moreover, we  can also obtain that the expected age for user~$i$ in slot~$t$ under Alg.~\ref{online-alg1} is less than or equal to the value of $\sum_{j=1}^t z_{i,j}(t)$ in Eq.~(\ref{inter-program:objective}) computed by Alg.~\ref{primal-dual-alg1}. Thus, the expected total cost  incurred by Alg.~\ref{online-alg1} is  less than or equal to the primal objective value computed by Alg.~\ref{primal-dual-alg1}. Then, the result immediately follows from Theorem~\ref{theroem:competitive-ratio1}.

{\small
	\bibliographystyle{IEEEtran}
	\bibliography{IEEEabrv,ref}
}

\end{document}